\documentclass[a4paper,USenglish,cleveref,thm-restate]{lipics-v2021}
\usepackage{tikz}
\usepackage{enumitem}

\definecolor{varyellow}{RGB}{253,216,53}

\newcommand{\avalzero}{\mathsf{A}}
\newcommand{\avalone}{\mathsf{B}}
\newcommand{\avaltwo}{\mathsf{C}}
\newcommand{\avalthree}{\mathsf{D}}
\newcommand{\avalfour}{\mathsf{E}}

\newcommand{\flip}[3]{#1\colon #2\rightsquigarrow #3}

\EventShortTitle{STACS 2027}
\EventAcronym{STACS}
\EventYear{2027}

\nolinenumbers

\makeatletter
\AtBeginDocument{%
  \ifx\authoranonymous\relax
    \gdef\@author{%
      {\Large\bfseries\textcolor{red}{Anonymous author(s)}\par}%
      {\small\textcolor{red}{Anonymous affiliation(s)}\par}%
    }%
  \fi
}
\makeatother

\title{Exponential Local Search in Bounded-Occurrence MAX-CSP}

\author{Sophia Heimann}%
{Hertz Chair for Algorithms and Optimization, University of Bonn, Germany}%
{sheimann@uni-bonn.de}%
{}
{}
\author{Hung P. Hoang}%
{Algorithms and Complexity Group, Faculty of Informatics, TU Wien, Austria}%
{phoang@ac.tuwien.ac.at}%
{https://orcid.org/0000-0001-7883-4134}
{funded by the Austrian Science Fund (FWF, project 10.55776/ESP1136425)}
\author{Stefan Hougardy}%
{Research Institute for Discrete Mathematics, University of Bonn, Germany \and Hausdorff Center for Mathematics, University of Bonn, Germany}%
{hougardy@or.uni-bonn.de}%
{https://orcid.org/0000-0001-8656-3418}%
{funded by the Deutsche Forschungsgemeinschaft (DFG, German Research Foundation) under Germany's Excellence Strategy -- EXC-2047/1 -- 390685813}

\authorrunning{S. Heimann and H. P. Hoang and S. Hougardy}
\keywords{Bounded-Occurrence MAX-CSP, Maximum Constraint Assignment, Local Search, all-exp property}
\ccsdesc[500]{Theory of computation}

\usepackage{xcolor}

\begin{document}

\titlerunning{Exponential Local Search in Bounded-Occurrence MAX-CSP}

\Copyright{Sophia Heimann and Hung P. Hoang and Stefan Hougardy}

\hideLIPIcs

\maketitle

\begin{abstract}
We study the number of improving steps that local search may require for $(2,2,r)$-MCA (Maximum Constraint Assignment), a bounded-occurrence version of MAX-CSP with valued constraints in which every constraint involves at most two variables, every variable occurs in at most two constraints, and every variable takes one of $r$ values. We consider the flip neighborhood, where two assignments are neighbors if they differ in the value of exactly one variable; moving to such a neighbor is called a flip. Local search repeatedly performs improving flips, that is, flips that strictly increase the objective value, until a local optimum is reached.

Our main result is that $(2,2,5)$-MCA has the all-exp property: there is an infinite family of instances with
initial assignments for which \emph{every} sequence of improving flips ending in a local optimum has length exponential in the number of variables. Thus, on these instances, local search requires exponentially many steps
regardless of which improving flip is chosen at each step. This is the first all-exp result for $(2,2,r)$-MCA for any value of $r$, and our result extends to every $r\ge 5$. 

In contrast, for $(2,2,2)$-MCA every sequence of improving flips has length at most quadratic in the number of variables~\cite{Kaz2020,KCJ2020}. We give a short, direct proof of this result.  
\end{abstract}

\newpage

\section{Introduction}

Improving a solution step by step through small changes is perhaps the most natural way to approach a hard optimization problem. Local search algorithms are built on exactly this principle, and they are among the most widely used methods in combinatorial optimization. The same principle also appears in processes that were never designed as algorithms. For example, the convergence of Hopfield networks to stable states~\cite{SY1991} and the adaptation of populations on fitness landscapes in evolutionary biology~\cite{Kaz2020,KCJ2020} can both be viewed as local search, in which a system repeatedly makes a small change that improves an objective. Understanding how local search behaves is therefore relevant well beyond algorithm design.

A local search algorithm moves in each iteration from the current solution to an improving solution in its neighborhood until it reaches a local optimum; that is, a solution without an improving neighbor.
A basic question about local search algorithms is how many iterations are needed to reach a local optimum. 
Following~\cite{Monien_Dumrauf_Tscheuschner_2010}, we say that improving sequences are \emph{exponentially long} if their length is at least $2^{\sqrt[k]{|I|}}$ for some fixed $k\in\mathbb{N}$ that is independent of the instance $I$.
A local search problem has the \emph{is-exp property} if there are infinitely many instances, each with an initial solution, from which some sequence of improving steps ending in a local optimum is exponentially long.  
It has the \emph{all-exp property} if there are infinitely many such instances and initial solutions from which \emph{every} sequence of improving steps ending in a local optimum is exponentially long. The all-exp property thus means that local search needs exponentially many iterations on these instances, no matter which improving neighbor is chosen in each iteration. 

In this paper, we ask how simple an optimization problem can be while local search still has the all-exp property. We study this question for maximum constraint satisfaction (MAX-CSP), one of the most general frameworks for combinatorial optimization, with its natural neighborhood. More precisely, we consider $(2,2,r)$-MCA (Maximum Constraint Assignment), the version of MAX-CSP with valued constraints in which every constraint involves at most two variables, every variable occurs in at most two constraints, and every variable takes one of $r$ values. This is about as restricted as a constraint problem can be while still allowing variables to interact. We use the \emph{flip neighborhood}, in which two assignments are neighbors if they differ in the value of exactly one variable; thus, a local search step changes the value of a single variable while leaving all others unchanged.

For binary variables, that is, for domain size $r=2$, local search reaches a local optimum 
after a number of iterations that is at most quadratic in the number of variables, regardless of the weights, as follows from the more general results of~\cite{Kaz2020,KCJ2020}. It is natural to ask whether this remains true for larger domains. Our main result shows that it does not: already with five values per variable, there are
instances and initial assignments from which every sequence of improving steps leading to a local optimum has exponential length. Previously, no all-exp result was known for $(2,2,r)$-MCA with the flip neighborhood for any value of~$r$.

\begin{restatable}{theorem}{Theoremallexp}
\label{thm:225}
    $(2,2,r)$-MCA with the flip neighborhood has the all-exp property  for $r \ge 5$.
\end{restatable}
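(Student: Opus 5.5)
Since every variable occurs in at most two constraints, each of arity at most two, the constraint graph of a $(2,2,r)$-MCA instance is a disjoint union of paths and cycles. I will therefore build the lower-bound family on a single path of $n$ variables $x_1,\dots,x_n$, with one binary constraint $w_i(x_i,x_{i+1})$ on each edge and possibly unary constraints at the two ends. Variable $x_i$ sees only $w_{i-1}$ and $w_i$. Hence a flip of $x_i$ from $a$ to $b$ is improving if and only if
\[
w_{i-1}(x_{i-1},b)+w_i(b,x_{i+1}) > w_{i-1}(x_{i-1},a)+w_i(a,x_{i+1}),
\]
which is a local rule of a one-dimensional "counter" whose state lives in $\{\avalzero,\avalone,\avaltwo,\avalthree,\avalfour\}$. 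The case $r>5$ then follows by padding: I add extra values whose weights against every neighbor value are hugely negative. Such values are never entered by an improving flip, and the initial assignment avoids them.

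The construction is a binary-counter-style recursion. The target is that every time $x_i$ performs one "cycle", it forces $x_{i+1}$ to perform at least two cycles, which gives at least $2^{n-1}$ flips in total. I will use the five values roughly as follows: $\avalzero$ is a rest state, $\avalone,\avaltwo$ form a forward phase, and $\avalthree,\avalfour$ form a backward phase. The weights will be chosen so that, at every reachable configuration, the improving flips available to a variable form a nearly deterministic "one step forward" relation, with $\flip{x_i}{\avalzero}{\avalone}$, $\avalone\rightsquigarrow\avaltwo$, and so on. Which step is allowed depends only on the current value of the neighbor. Weights grow geometrically along the path, with $w_i$ scaled by something like $M^{i}$, so that a variable's preferences are dominated by the constraint to its right or its left as needed. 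The usual trick applies: $x_{i+1}$ can only return to the rest state after $x_i$ has signaled twice, once through $\avalone/\avaltwo$ and once through $\avalthree/\avalfour$.

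The argument then proceeds in four steps.
\begin{enumerate}[label=(\roman*)]
\item Specify the tables $w_i$ with concrete integer weights.
\item Define an invariant set of "legal" assignments containing the initial one, and verify by a finite case check that every improving flip from a legal assignment stays legal. This is a check over pairs of neighboring values, uniform in $i$ apart from the scaling.
\item Define a potential or phase function. Prove by induction on $n-i$ that, in any improving sequence from a legal state ending in a local optimum, the subpath $x_i,\dots,x_n$ must carry out two full cycles of $x_{i+1}$ per cycle of $x_i$. The point is that the local optimum of the whole instance forces every variable back to rest, so no variable can stop halfway.
\item Conclude that the number of flips is at least $2^{\Omega(n)}$, and that the instance size is polynomial in $n$: the numbers $M^{i}$ have $O(n\log M)$ bits, or one can alternatively use ordinal-type weights that are only locally compared.
\end{enumerate}

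The main obstacle is the \emph{all}-exp requirement, that is, controlling arbitrary interleavings. Improving flips may occur at any position in any order, and a short-cutting schedule must be excluded. In the design of step (iii) I would first make each step of the invariant monotone: a variable cannot re-enter a phase value until its neighbor has changed. I would then show that the multiset of "pending signals" between neighbors cannot be discharged without the doubling behaviour, since an uncounted cycle would leave a variable stuck with a pending improving flip, contradicting local optimality.

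Five values are presumably needed precisely so that the sender can distinguish "first request" from "second request" and the receiver cannot merge the two responses into one. If my first weight assignment allows such merging, I will repair it by shifting which of $\avaltwo,\avalthree$ acts as the handshake value, and redo the finite case check in step (ii).
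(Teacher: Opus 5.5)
Your proposal is a plan rather than a proof. The all-exp property is an existence statement, and its whole content is an explicit family of tables together with a verification that every maximal improving sequence is long. Step (i) is never carried out, so steps (ii) and (iii) have nothing to check. For the part you yourself call the main obstacle, arbitrary interleavings, you only describe what a potential or ``pending signals'' argument would have to show. The only part actually done is the padding from $r=5$ to $r>5$, and it is fine: the paper uses entries $0$ for the new values together with nonnegative weights, and your very negative entries work equally well.

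The missing idea is how the paper disposes of interleavings. It does not count signals at all. Instead, it designs the tables so that at every assignment reached from the all-zero start, except the final local optimum, exactly one flip is improving, with gain $1$. This is proved by a simultaneous induction on $(A_i)$ and $(B_i)$: with $a_{i+1}=\avalone$, resp.\ $a_{i+1}=\avalthree$, the variables of $V_i$ deterministically execute $\alpha_i$, resp.\ $\beta_i$. The induction uses two mechanisms.
First, a freezing property $(P_i)$: $a_{i+1}\in\{\avalzero,\avalfour\}$ pins $V_i$ at $\mathbf G_i$, and $a_{i+1}=\avaltwo$ pins it at $\mathbf R_i$. Here a loss of at least $2M_{i+1}$ in $J_{i+1}$ pins $b_i$, and hence all of $V_i$.
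Second, an unambiguous completion signal: $b_{i-1}$ takes the value $1$ (resp.\ $0$) only at the last assignment of $\alpha_{i-1}$ (resp.\ $\beta_{i-1}$), which is exactly when $a_i$ gets its gain-$1$ flip.

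Two features of your outline would also have to change. First, as written, your scaling runs the wrong way. Take a full cycle of $x_2$. All of $x_2,\dots,x_n$ return to their initial values, and $x_1$ must flip, since otherwise an assignment would repeat. With integer weights the objective rises by at least the exponentially many flips in that window, and this rise must be paid by the constraints incident to $x_1$ during its $O(1)$ flips. Hence the weights near the slowly moving end must be exponentially large. With $w_i$ scaled by $M^i$ for a fixed base, however, your $w_1$ is the lightest constraint. In the paper, $M_i=100^{i+1}$ is largest at level $n$, whose variables flip only seven times, and smallest at $b_0$, which flips $2^n$ times.

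Second, with one variable per site the same table $w_i$ must do two things at once. It must leave $x_{i+1}$ free to walk through its values under a given instruction of $x_i$, and it must let $x_i$ detect that $x_{i+1}$ has finished. Nothing in your outline shows that five values achieve both. The paper separates these roles by using two variables per level. Right after detecting completion, the controller $a_i$ moves into a pinning value: $\avaltwo$ after a forward pass, $\avalfour$ or $\avalzero$ after a backward pass. Then the interface variable $b_i$ moves, and only afterwards does $a_i$ enter the next instruction value $\avalthree$ or $\avalone$. Over a full cycle, $a_i$ runs through $\avalzero,\avalone,\avaltwo,\avalthree,\avalfour,\avalone,\avaltwo,\avalthree$ and $b_i$ through $0,2,3,1,2,4$. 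The level therefore visits all twelve admissible pairs, far more states than a single five-valued variable has.
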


A standard way to establish the all-exp property for a local search problem is to prove its PLS-completeness via a tight PLS-reduction. The complexity class PLS was introduced by Johnson, Papadimitriou, and Yannakakis to formalize the complexity of finding locally optimal solutions~\cite{JPY1988}. Schäffer and Yannakakis subsequently introduced tight PLS-reductions and showed that such reductions preserve the all-exp property; hence a problem that is PLS-complete under tight reductions has the all-exp property~\cite{SY1991}. 
In our setting, this route to proving the all-exp property is not available, as Dumrauf and Monien~\cite{DM2013} observed that for $(2,2,r)$-MCA even a globally optimal solution can be found in polynomial time; indeed, the constraint graph of such an instance has maximum degree two, so it is a disjoint union of paths and cycles, and an optimal assignment can be computed in polynomial time by dynamic programming. 
Since a global optimum is in particular a local optimum, a local optimum of $(2,2,r)$-MCA can be found in polynomial time. If $(2,2,r)$-MCA were PLS-complete, this would yield polynomial-time algorithms for all problems in PLS. Hence, unless all problems in PLS can be solved in polynomial time, $(2,2,r)$-MCA is not PLS-complete, and~\cref{thm:225} must be proved by a different approach. 

We therefore prove~\cref{thm:225} directly from the definition of the all-exp property. For every $n\geq1$ we construct an instance of $(2,2,5)$-MCA with $2n+1$ variables and $2n+1$ constraints, together with an initial assignment, such that every maximal sequence of improving flips has length exactly $2^{n+3}-7$; the case $r>5$ then follows by adding values that are never used. Since $(2,2,r)$-MCA can be solved in polynomial time, \cref{thm:225} also shows that the all-exp property is a statement about local search itself: these instances are easy for other algorithms, yet local search with the flip neighborhood, started at the given initial assignment, needs exponentially many steps, regardless of which improving flip it chooses in each step.

\subsection{Previous work}

MCA provides a natural extension of weighted satisfiability problems and was studied systematically by Dumrauf and Monien~\cite{DM2013}. Building on techniques introduced by Krentel~\cite{Kre1989}, they proved the PLS-completeness of $(3,2,3)$-MCA and $(2,3,6)$-MCA via a tight reduction, and hence the all-exp property for both problems. They also obtained PLS-completeness via a tight reduction for the binary problem $(6,2,2)$-MCA. On the positive side, Dumrauf and Monien~\cite{DM2013} proved that $(2,2,r)$-MCA can be solved in polynomial time for all values of $r$. Moreover, Kaznatcheev, Cohen, and Jeavons~\cite{KCJ2020} proved that for binary variables, constraints of arity at most two, and a constraint graph that is a tree, every sequence of improving flips has length at most quadratic in the number of variables, independently of the weights; for constraint graphs of maximum degree two, which also include cycles, they sketch the same bound and refer to Kaznatcheev~\cite{Kaz2020} for the full analysis. In particular, every sequence of improving flips for $(2,2,2)$-MCA has at most quadratic length. For $(2,2,3)$-MCA, 
Kaznatcheev, Cohen, and Jeavons~\cite{KCJ2020} proved the is-exp property by an explicit family of instances whose constraint graphs are paths. 

Recently, Kaznatcheev and Volgering~\cite{KV2026} constructed instances with binary variables and with constraints of arity at most three whose constraint graphs have pathwidth three and from which all sequences of improving flips are exponentially long. They conjectured that for binary variables and constraint graphs of treewidth two, there is always a sequence of improving flips of polynomial length 
from any initial assignment. Our construction shows that no such statement holds once the domain has five values: there, a constraint graph that is a path already admits the all-exp property.

For Max-Cut in graphs of maximum degree~$d$, which is a special case of $(2,d,2)$-MCA (see Section~\ref{sec:notation}), the complexity of local search with the flip neighborhood is well studied. Schäffer and Yannakakis~\cite{SY1991} proved that Max-Cut with the flip neighborhood is PLS-complete under tight reductions and hence has the all-exp property. Poljak~\cite{Pol1995} showed that on graphs of maximum degree three, every sequence of improving flips has length $O(n^2)$, where $n$ is the number of vertices, and asked whether the all-exp property holds for maximum degree four. Monien and Tscheuschner~\cite{MT2010} answered this question affirmatively, and Michel and Scott~\cite{MS2023} later provided a considerably simpler construction. Elsässer and Tscheuschner~\cite{ET2011} proved PLS-completeness for graphs of maximum degree five via a tight reduction; for maximum degree four, PLS-completeness is still open.

\subsection{Our contributions}

As our main result, we prove that $(2,2,r)$-MCA with the flip neighborhood has the all-exp property for $r\ge 5$ (\cref{thm:225} above). 
Our instances for $r=5$ have $2n+1$ variables, $2n+1$ constraints with nonnegative integer values, and a constraint graph that is a path; from the initial assignment, the unique maximal sequence of improving flips has length $2^{n+3}-7$. 
In fact, at every assignment along this sequence except the last one, exactly one flip is improving. 
We present this proof in \cref{sec:225}.

In addition, in Section~\ref{sec:222}, we provide a simpler proof of the following result.
\begin{restatable}{theorem}{Theoremttt}
\label{thm:222-short}
Every strictly improving flip sequence for a $(2,2,2)$-MCA instance on $n$ variables has length $O(n^2)$. 
In particular, $(2,2,2)$-MCA does not have the is-exp property, and hence not the all-exp property.
\end{restatable}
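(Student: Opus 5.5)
We would prove \cref{thm:222-short} by a \emph{blame} (or encouragement) argument, working directly with the algebraic form of the constraints. Since the constraint graph has maximum degree two, it is a disjoint union of paths and cycles, and flips in different components do not interact. It therefore suffices to bound the number of flips in one component on $m$ variables by $O(m^2)$ and then sum over the components. With domain $\{0,1\}$, every unary or binary constraint can be written as a multilinear polynomial. Summing all constraints, the objective becomes
\[
  f(x)=c+\sum_v h_v x_v+\sum_{uv\in E} d_{uv}x_ux_v .
\]
Flipping $v$ changes $f$ by $\pm g_v(x)$, where $g_v(x)=h_v+\sum_{u\sim v}d_{uv}x_u$. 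Thus a flip of $v$ is improving exactly when it moves $x_v$ to the value indicated by the sign of $g_v$. This sign depends only on the at most two neighbours of~$v$.

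The first step is the basic attribution. Between two consecutive flips of $v$, the sign of $g_v$ must change. Hence some neighbour $u$ flipped in between, and that flip changed the sign of $g_v$. We \emph{blame} the later flip of $v$ on the last such neighbour flip. Every flip is then either the first flip of its variable, of which there are at most $m$, or it has a unique predecessor in a blame chain that moves along an edge of the component. Two consequences follow. Every variable whose sign of $g_v$ is independent of both neighbours flips at most once. The interesting edges are those with $|d_{uv}|$ large enough to create sign dependence.

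The second step, and the key lemma, is a \emph{no-reversal} property. Suppose a flip of $v$ is blamed on a flip of $u$. We want to show that the next flip of $u$ cannot be blamed on this flip of $v$. The intended argument is a local exchange computation on the single edge $uv$. The flip of $u$ makes $v$'s flip improving by changing the sign of $d_{uv}x_u$ relative to $h_v$ plus the other neighbour's term. We then need that $v$'s responding flip moves $x_v$ in the direction that increases the contribution of $d_{uv}x_ux_v$. If $v$'s flip then made $u$'s return flip improving, the two flips combined with the return flip would strictly increase $f$ while restoring the pair $(x_u,x_v)$. That contradicts the fact that $f$ restricted to the edge changes by a telescoping amount. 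Granting the lemma, every blame chain travels monotonically in one direction along the path or cycle. Each flip therefore lies at the end of a chain that starts at a first flip and moves in a fixed direction.

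The third step is the counting. On a path, a monotone chain has length at most $m$, and it starts at one of at most $m$ first flips. This bounds the total number of flips by $m^2+m$. The main obstacle I expect is twofold. First, the no-reversal computation must be carried out correctly when the other neighbour also flips in between, so that the sign change on $v$ is genuinely attributable to $u$. I would first try to handle this by choosing the blame as the last sign-changing flip and doing a four-case check on the signs of $d_{uv}$. Second, on a cycle a monotone chain could a priori wind around repeatedly. For this I would show that a chain cannot complete a full turn: the product of the signs of the $d$'s around the cycle, together with the potential $f$, should forbid a cyclic cascade. Failing that, I would cut the cycle at an edge whose blame relation is one-directional. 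Either approach bounds chain length by $O(m)$ and hence yields the claimed $O(n^2)$ bound. The final sentence of the theorem then follows immediately from the definition of the is-exp property.
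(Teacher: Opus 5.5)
\textbf{There is a genuine gap, in two places.}

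\textbf{First gap: the no-reversal lemma.} Your key no-reversal lemma is false for the blame rule you chose, which blames the last neighbour flip that changed the sign of $g_v$. The justification you sketch also does not hold up. After flips of $u$, $v$, $u$, the pair $(x_u,x_v)$ is not restored, and an increase of $f$ contradicts nothing.

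Here is a concrete counterexample. Take the path with edges $zu,uv,vw$ and $f=3x_z+x_u-x_v+x_w-2x_zx_u+2x_ux_v+2x_vx_w$. Start at $(x_z,x_u,x_v,x_w)=(0,0,1,0)$ and flip $v,u,w,z,u,v,u$ in this order; every flip has gain $1$.
\begin{itemize}
\item Between the two flips of $v$, the only sign change of $g_v=-1+2x_u+2x_w$ comes from the first flip of $u$. So the second flip of $v$ is blamed on it, although $u$ has meanwhile flipped back (because of $z$), and it is $w$ that keeps $g_v$ positive.
\item The second flip of $v$ is then the only sign change of $g_u=1-2x_z+2x_v$ before the last flip of $u$. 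So the chain goes $u\to v\to u$.
\end{itemize}
The problem is that the blamed flip may already be undone when $v$ flips.

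What works is to blame via net changes. Suppose $v$ flips at consecutive times $s<t$, in direction $\epsilon_v\in\{\pm1\}$ at time $t$. Then $\epsilon_v\bigl(g_v(t)-g_v(s)\bigr)>0$, and this difference equals $\sum_{u\sim v}d_{uv}\Delta_u$, where $\Delta_u$ is the net change of $x_u$ on $(s,t)$. Pick $u$ with $\epsilon_v d_{uv}\Delta_u>0$, and blame the last flip of $u$ before $t$, whose direction is $\epsilon_u=\Delta_u$. A reversal would then require both $\epsilon_u d_{uv}\epsilon_v>0$ and $-\epsilon_u d_{uv}\epsilon_v>0$, which is impossible. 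This rule also places the blamed flip strictly between consecutive flips of $v$. Hence every flip has at most one child per neighbour, a fact your count $m^2+m$ needs but does not state.

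\textbf{Second gap: cycles.} Even after this repair, you have only handled components that are paths. For cycles you offer two hopes rather than an argument.
\begin{itemize}
\item It is not evident that a chain cannot complete a turn: a single unsatisfied edge can travel around a frustrated cycle, driven by the edge magnitudes and unary terms.
\item Cutting at an edge with one-directional blame presupposes that such an edge exists, which you have not shown.
\end{itemize}
So the $O(n^2)$ bound is not established for cycle components.

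\textbf{Comparison with the paper.} The paper sidesteps the dynamics entirely. Each gain $h_i+q_i(x)$ involves at most two edge coefficients. So replacing each $c_e$ by $2\operatorname{sgn}(c_e)$ times the rank of $|c_e|$, and then each $h_i$ by a suitable small integer, preserves the sign of every gain. The new objective is integer-valued with range at most $3n^2+4n$, which bounds every improving sequence, uniformly for paths and cycles. Your route, once completed, would give more structural information, and on paths a bound of about $m^2$. But the cycle case needs a genuinely new chain-length bound.
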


The quadratic length bound for $(2,2,2)$-MCA follows from a more general result 
proven in~\cite{Kaz2020,KCJ2020}. 
We give a more direct and much shorter proof of this statement.

From the two results above, the remaining cases are $r=3$ and $r=4$.
The is-exp property is known for $r=3$~\cite{KCJ2020}, and hence also for $r=4$; whether the all-exp property holds is open in both cases.


\section{Preliminaries}
\label{sec:notation}

An instance of $(p,q,r)$-MCA, the Maximum Constraint Assignment problem with parameters $p$, $q$, and $r$, consists of variables with a common domain of $r$ values and a collection of constraints. Each constraint involves at most $p$ variables and assigns a weight to every combination of values of these variables. The parameter $q$ bounds the number of constraints in which each variable occurs. An assignment of values to all variables thus receives one weight from each constraint, and its objective value, the sum of these weights, is to be maximized.

In the terminology of constraint satisfaction, $(p,q,r)$-MCA is a weighted MAX-CSP over a domain of size $r$, restricted to instances in which every constraint has arity at most $p$ and every variable occurs in at most $q$ constraints.  The constraints are arbitrary: each is given by an explicit table of weights and need not be derived from a predicate of a particular type. 
A weighted constraint that is either satisfied or violated is the special case in which this table takes only two values, the weight of the constraint and zero. Problems of this form are also studied under the name \emph{valued constraint satisfaction problems} (VCSPs)~\cite{SFV1995,CCJK2006,Zivny2012}, which are usually formulated as minimization problems. Since all weights in MCA are finite, $(p,q,r)$-MCA corresponds, after negating the weights, to
finite-valued VCSPs~\cite{TZ2016} of arity at most $p$ in which every variable occurs in at most $q$ constraints. This terminology is also used in the work of Kaznatcheev et al.~\cite{Kaz2020,KCJ2020} on local search, discussed above. We keep the name MCA used by Dumrauf and Monien~\cite{DM2013}, since our results build on their work and since the parameters $p$, $q$, and $r$ make the three restrictions explicit. 

Many classical optimization problems, such as weighted MAX-SAT and Max-Cut, are special cases of MCA. In an instance of Max-Cut, the vertices of the graph become binary variables, and an edge of weight $w$ becomes a constraint on its two endpoints that contributes $w$ if the endpoints receive different values and $0$ otherwise. 

We consider local search with the \emph{flip neighborhood}: two assignments are neighbors if they differ in the value of exactly one variable, and moving to such a neighbor is called a \emph{flip}. The name comes from the binary case, where the only possible change is to flip a variable to its other value; for $r>2$, a flip may assign any of the other $r-1$ values. A flip is \emph{improving} if it strictly increases the objective value, and an assignment is a \emph{local optimum} if no improving flip exists.

\section{The all-exp property for (2,2,5)-MCA}
\label{sec:225}

In this section, we prove our main result:

\Theoremallexp*

It suffices to prove this result for $r=5$. Indeed, an instance of $(2,2,5)$-MCA with nonnegative constraint values can be turned into an instance of $(2,2,r)$-MCA for any $r>5$ by adding $r-5$ new values to the domain of every variable and assigning the value $0$ to every entry of a constraint that involves a new value. Since all constraint values are nonnegative, a flip to a new value never increases the objective value, so the improving sequences starting from an assignment without new values do not change. Hence, it is enough to prove the following statement:

\begin{theorem}\label{thm:225-all-exp}
$(2,2,5)$-MCA with the flip neighborhood has the all-exp property.
\end{theorem}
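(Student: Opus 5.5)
We give an explicit construction. For every $n\ge 1$ we build an instance $I_n$ of $(2,2,5)$-MCA together with an initial assignment. The constraint graph is a path on $2n+1$ variables
\[
x_0, y_1, x_1, y_2, x_2, \dots, y_n, x_n,
\]
with one binary constraint between consecutive variables. The constraint count $2n+1$ is reached by adding a unary constraint at an end of the path. All variables use the domain $\{\avalzero,\avalone,\avaltwo,\avalthree,\avalfour\}$.

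The guiding picture is a chain of counters, in the spirit of the classical exponential examples for local search. Each \emph{module} $i$, consisting of $y_i$ and $x_i$, is a ``doubler'': whenever its left neighbour $x_{i-1}$ completes one full cycle of its local behaviour, module $i$ forces $x_{i-1}$ to cycle twice. Concretely, $x_{i-1}$ must walk through a fixed path of values such as
\[
\avalzero\rightsquigarrow\avalone\rightsquigarrow\avaltwo\rightsquigarrow\avalthree\rightsquigarrow\avalfour .
\]
The intermediate values serve as ``half-done'' states that signal to $y_i$ when it may advance. Then $y_i$ relays this to $x_i$, which eventually resets. The count is then $L_n=2L_{n-1}+c$, and with suitable initial terms this solves to the claimed $2^{n+3}-7$.

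The key design principle is \emph{determinism}. At every non-final assignment along the intended run, exactly one flip is improving. This makes ``every maximal improving sequence'' equal to the single intended one, so the all-exp property follows immediately. The instance has $2n+1$ variables, and $2^{n+3}-7\ge 2^{\sqrt{|I_n|}}$ once the constraint values have polynomial encoding length. To keep the encoding length polynomial, the weights in module $i$ should scale geometrically, say by a factor $\lambda^i$ for a constant $\lambda$. Then a constraint near the far end dominates its neighbour's local preferences, while the total encoding length stays $O(n)$ bits.

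The steps I would carry out are as follows.

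\begin{enumerate}
\item Fix the local state machine for a single variable $x$ sitting between two constraints. A flip of $x$ changes only the two incident constraints. So $x$ flips to value $v$ iff the sum of its left and right constraint values strictly increases, and the best target is determined by the current values of its two neighbours.

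\item Design the $5\times5$ tables so that, for each reachable pair of neighbour values, exactly one value of $x$ (other than the current one) is strictly better. Since only one flip may be improving globally, also ensure that all other variables are locally optimal with respect to their neighbours at that moment.

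\item Define the set of reachable configurations. Prove by induction on $n$ that from the initial assignment the run of $I_n$ restricted to the last module behaves like two copies of the run of $I_{n-1}$, separated by a constant number of flips in module $n$. At each reachable configuration, verify uniqueness of the improving flip by a finite case check.
\end{enumerate}

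The main obstacle is making the inductive invariant genuinely local. The run of $I_{n-1}$ embedded in $I_n$ must not be disturbed by the presence of $y_n$. This means that whenever $y_n$ is in its waiting states, the constraint $(x_{n-1},y_n)$ must not create new improving flips for $x_{n-1}$, nor suppress the intended ones. I would try first to make that constraint constant in $x_{n-1}$ on the waiting values of $y_n$, except on the one ``handshake'' value, and to use the weight scaling so that the handshake overrides $x_{n-1}$'s left constraint only at that value. Tuning five values so that both the reset and the two-fold repetition occur with no spurious improving flips is where I expect most of the case analysis. I would verify it by tabulating, for each state of a module, the incident constraint values and the resulting strict preferences.
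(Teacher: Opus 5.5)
Your proposal is an outline, not a proof. The theorem is an existence statement, and its whole content is exhibiting concrete tables over five values and checking that exactly one flip is improving at every step of the run. You explicitly defer this step (``design the $5\times5$ tables so that\dots'', ``tuning five values \dots\ is where I expect most of the case analysis''). Nothing in your outline uses that the domain has five values. It reads verbatim for $r=3$ or $r=4$, where the all-exp property is open, so it cannot by itself establish the statement. A smaller point: the encoding you describe has $\Theta(n^2)$ bits with a leading constant well above $1$, so $2^{n+3}-7\ge2^{\sqrt{|I_n|}}$ fails for large $n$. The definition only asks for $2^{\sqrt[k]{|I_n|}}$ for some fixed $k$, and $k=3$ works.

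More substantively, the outline misses the constraint that makes such a gadget hard to build. Under strictly improving flips, a module cannot simply ``cycle twice'' while its controller waits. The lower levels return to the same configuration, so the forward and return trips must be driven by different incentives, and their net gain over a round trip must be absorbed by the interface constraint. The paper handles this as follows.

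\textbf{Two half-sweeps.} Level $i-1$ has two distinct half-sweeps: $\alpha_{i-1}$ from $\mathbf{G}_{i-1}$ to $\mathbf{R}_{i-1}$, and $\beta_{i-1}$ back along a different route. For $i\ge2$, $b_{i-1}$ runs through $0,2,3,1$ forward and $1,2,4,0$ backward. Each of $\alpha_i,\beta_i$ contains one copy of each, giving $L_i=2L_{i-1}+7$.

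\textbf{Mode selection.} The interface variable $a_i$ selects the mode. The column $J_i(\cdot,\avalone)\equiv M_i$ is flat (this is your ``constant on waiting values'' idea), so the lower levels sweep forward by their own tables. The column $J_i(\cdot,\avalthree)=M_i+(H_{i-1},0,\tfrac12H_{i-1},0,H_{i-1})$ reverses their preferences. The columns for $a_i\in\{\avalzero,\avalfour\}$ and for $a_i=\avaltwo$ pin $b_{i-1}$ at $0$ and at $1$, respectively, with a $2M_i$ bonus.

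\textbf{Completion signal.} $b_{i-1}$ equals $1$ (resp.\ $0$) only at the last assignment of $\alpha_{i-1}$ (resp.\ $\beta_{i-1}$). In its waiting states, $a_i$ is improvable only when it sees that value.

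\textbf{The bias is forced.} Every flip gains exactly $1$, $J_i$ is constant during $\alpha_{i-1}$ and gains $H_{i-1}$ during $\beta_{i-1}$, and the lower levels return to $\mathbf{G}_{i-1}$. Hence one must take $H_{i-1}=2L_{i-1}$. This is why the paper uses a second, doubling scale $H_i=16\cdot2^i-14$ nested below the separation scale $M_i=100^{i+1}$.

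A single factor $\lambda^i$ only separates the levels and does not address this bookkeeping. Finding five-valued tables that realize all of it without spurious improving flips is exactly the step your proposal leaves open.
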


\begin{proof}
We will prove this result by showing that for every $n\geq1$ there is a $(2,2,5)$-MCA instance $Q_n$ with $2n+1$ variables and nonnegative integer constraint values, and an initial assignment such that every maximal sequence of
strictly improving flips starting from this assignment has length exactly $2^{n+3}-7$.  
Moreover, we will see that at every assignment occurring in this sequence, except the last one, there is a unique strictly improving flip, and that it increases the objective value by exactly one.

The $2n+1$ variables of $Q_n$ are as follows: for $i\in\{1,\ldots,n\}$ we have two variables $a_i,b_i$, and we additionally have a variable $b_0$. 
All variables have the domain $D=\{0,1,2,3,4\}$.

Our instance $Q_n$ contains one unary constraint $K_0$ on the variable $b_0$ and, for every $i \in \{1, \ldots, n\}$, two binary constraints $J_i$ and $K_i$. The constraint $J_i$ contains the two variables $b_{i-1}$ and $a_i$, while the constraint $K_i$ contains the two variables $a_i$ and $b_i$.
Hence, the constraint graph is a path $(b_0, a_1, b_1, \dots, a_n, b_n)$.
Thus, every constraint contains at most two variables, and every variable occurs in at most two constraints, so the resulting instance $Q_n$ is a $(2,2,5)$-MCA instance.

\begin{figure}
\centering 

\begin{tikzpicture}[scale=0.75, var/.style={draw=black,thick,fill=varyellow}]

  \draw[gray!80,line width=1pt] (-1.2,0) -- (16,0);

  \foreach \x/\n in {
      -1.2/K_0,
       1/J_1,  3/K_1,
       5/J_2,  7/K_2,
       9/J_3, 11/K_3,
      13/J_4, 15/K_4}{
        \node[] at (\x,0.3) {\small $\n$};
  }

  \foreach \k in {0,...,4}{
    \filldraw[var] (4*\k,0) circle (0.4);
    \node at (4*\k,0) {$b_{\k}$};
  }
  \foreach \k in {1,...,4}{
    \filldraw[var] ({4*\k-2.4},-0.4) rectangle ({4*\k-1.6},0.4);
    \node at ({4*\k-2},0) {$a_{\k}$}; 
  }

  \node[font=\footnotesize,anchor=north] at (-0.6,-1) {level 0};
  \foreach \i in {1,...,4}{
     \node[font=\footnotesize,anchor=north] at ({4*\i-1.6},-1) {level \i};
  }
  \foreach \i in {0,...,4}{
    \draw[gray, dashed] ({4*\i+0.4},-2) -- ({4*\i+0.4},0);
  }
\end{tikzpicture}
\caption{Visualization of the instance $Q_4$ showing its path-like structure. The variables $b_i$ are shown as circles, the variables $a_i$ as squares. An edge between two variables represents a constraint containing these two variables. The constraint $K_0$ only contains the variable $b_0$.}
\label{fig:Q4}
\end{figure}

We group the variables and constraints into \emph{levels}: level $0$ consists of the variable $b_0$, and for $1\leq i\leq n$, level $i$ consists of the two variables $a_i$ and $b_i$.  The constraint $K_0$ is the constraint of level $0$, and $J_i$ and $K_i$ are the constraints of level $i$; $K_i$ involves only the variables of level $i$, whereas $J_i$ connects level $i$ to level $i-1$.  Figure~\ref{fig:Q4} shows a visualization of the instance $Q_4$. For $0\leq i\leq n$ let
$V_i:=\{b_0,a_1,b_1,\ldots,a_i,b_i\}$
be the set of variables of levels $0,\ldots,i$.  When level $i$ is considered, we refer to the variables in $V_{i-1}$ as the \emph{lower levels}.

For readability, we denote the five values of an $a_i$-variable by
\begin{equation*}
        \avalzero=0,\qquad \avalone=1,\qquad \avaltwo=2,\qquad \avalthree=3,\qquad \avalfour=4.  
\end{equation*}

The constraint values depend on two sequences of integer parameters 
\begin{equation*}
    H_i := 16\cdot2^i-14\qquad\text{ and} \qquad M_i :=100^{i+1}  \qquad \text{ for } i=0, \ldots, n.
\end{equation*}
Note that all $H_i$ are even, that $H_i=2H_{i-1}+14$ for $i\geq1$, and that $H_{i-1}<H_i < M_i$ for all $i\geq1$.

The values of the constraints $K_0$, $J_i$, and $K_i$ are given in Tables~\ref{tab:K0}, \ref{tab:Ji}, and \ref{tab:Ki}.

\begin{table}[t!]
\centering 
\caption{Definition of the values for the constraint $K_0$ depending on the value of the variable $b_0$.}
\label{tab:K0}
$
\begin{array}{c|ccccc}
b_0&0&1&2&3&4\\ \hline
K_0(b_0)&100&101&0&0&0
\end{array}
$
\end{table}

\begin{table}[t!]
\centering 
\caption{Definition of the values for the constraints $J_i$ for $i=1,\ldots, n$ depending on the values of the variables $b_{i-1}$ and $a_i$.}
\label{tab:Ji}
$
\begin{array}{c|ccccc}
J_i(b_{i-1},a_i)&a_i=\avalzero&a_i=\avalone&a_i=\avaltwo&a_i=\avalthree&a_i=\avalfour\\ \hline
b_{i-1}=0&2M_i+H_{i-1}&M_i&0&M_i+H_{i-1}&2M_i+H_{i-1}\\
b_{i-1}=1&0&M_i&2M_i&M_i&0\\
b_{i-1}=2&\tfrac12H_{i-1}&M_i&0&M_i+\tfrac12H_{i-1}&\tfrac12H_{i-1}\\
b_{i-1}=3&0&M_i&0&M_i&0\\
b_{i-1}=4&H_{i-1}&M_i&0&M_i+H_{i-1}&H_{i-1}
\end{array}
$
\end{table}

\begin{table}[t!]
\centering 
\caption{Definition of the values for the constraints $K_i$ for $i=1,\ldots, n$  depending on the values of the variables $a_i$ and $b_i$.}
\label{tab:Ki}
$
\begin{array}{c|ccccc}
K_i(a_i,b_i)&b_i=0&b_i=1&b_i=2&b_i=3&b_i=4\\ \hline
a_i=\avalzero&9M_i&0&9M_i+1&0&9M_i-1\\
a_i=\avalone&0&0&10M_i+H_{i-1}+2&0&0\\
a_i=\avaltwo&0&0&9M_i+H_{i-1}+3&9M_i+H_{i-1}+4&9M_i-3\\
a_i=\avalthree&0&0&0&10M_i+H_{i-1}+5&10M_i-2\\
a_i=\avalfour&0&9M_i+H_{i-1}+7&9M_i+1&9M_i+H_{i-1}+6&0
\end{array}
$
\end{table}

The initial assignment assigns $0$ to all $b_i$-variables and $\avalzero$ to all $a_i$-variables; since $\avalzero=0$, it is the all-zero assignment.

We write $\flip{x}{u}{v}$ for the flip that changes the value of the variable $x$ from $u$ to $v$. 
Recall that a flip is improving if it strictly increases the objective value; we call the increase its \emph{gain}. Since every variable occurs in at most two constraints, the gain of a flip of a variable depends only on the values of that variable and of its (at most two) neighbors on the path.  Explicitly:
\begin{itemize}
\item the gain of a flip of $a_i$ is determined by $b_{i-1}$ and
      $b_i$, through $J_i$ and $K_i$;
\item the gain of a flip of $b_i$, $1\leq i<n$, is determined by
      $a_i$ and $a_{i+1}$, through $K_i$ and $J_{i+1}$;
\item the gain of a flip of $b_n$ is determined by $a_n$, through
      $K_n$ alone;
\item the gain of a flip of $b_0$ is determined by $a_1$, through
      $K_0$ and $J_1$.
\end{itemize}

By the \emph{state} of level $i\geq1$ we mean the pair of values $(x,y)$ currently assigned to $(a_i,b_i)$; we write ``level $i$ is at $(x,y)$'' or simply ``at $(x,y)$''.  Let 
\begin{equation*}
  \mathcal A=\bigl\{(\avalzero,0),(\avalzero,2),(\avalzero,4),(\avalone,2),(\avaltwo,2),(\avaltwo,3),(\avaltwo,4),
                   (\avalthree,3),(\avalthree,4),(\avalfour,1),(\avalfour,2),(\avalfour,3)\bigr\}
\end{equation*}
be the set of pairs of values at which the table of $K_i$ has a
nonzero entry; this set is the same for every $i$.  We call the pairs in
 $\mathcal A$ \emph{admissible}.
Every entry $K_i(x,y)$ with $(x,y)\in\mathcal A$ is at least
$9M_i-3$, whereas $K_i(x,y)=0$ for $(x,y)\notin\mathcal A$.  A flip of
$a_i$ or of $b_i$ that changes the state of level $i$ from an
admissible to a non-admissible pair therefore loses at least $9M_i-3$
in $K_i$.

In the following three claims, we analyze those combinations of states and variable values that occur in the improving sequence described afterwards.
Note that we are only concerned with admissable pairs, since as argued above, starting from admissiable pairs, we will not encounter inadmissible pairs in an improving flip sequence.

\begin{claim}[Flips of $a_i$]\label{cl:a}
Let $1\leq i\leq n$.  For each of the following combinations of the
state of level $i$ and the value of $b_{i-1}$, the improving flips of
$a_i$ are as stated, and each flip listed is the only improving flip
of $a_i$ and has gain $1$.

\begin{enumerate}[label=(\alph*),ref=\theclaim(\alph*)]
\item\label{cl:a:idle}
      If level $i$ is at $(\avalzero,0)$ or at $(\avalfour,1)$, then for every value
      of $b_{i-1}$ the variable $a_i$ has no improving flip.
\item\label{cl:a:F}
      If level $i$ is at $(\avalone,2)$, then $a_i$ has an improving flip if
      and only if $b_{i-1}=1$, namely $\flip{a_i}{\avalone}{\avaltwo}$.
\item\label{cl:a:H3}
      If level $i$ is at $(\avalthree,3)$, then $a_i$ has an improving flip if
      and only if $b_{i-1}=0$, namely $\flip{a_i}{\avalthree}{\avalfour}$.
\item\label{cl:a:H4}
      If level $i$ is at $(\avalthree,4)$, then $a_i$ has an improving flip if
      and only if $b_{i-1}=0$, namely $\flip{a_i}{\avalthree}{\avalzero}$.
\item\label{cl:a:zero}
      Let $b_{i-1}=0$.  If level $i$ is at $(\avalzero,2)$ or at $(\avalfour,2)$, then
      $a_i$ has the improving flip $\flip{a_i}{\avalzero}{\avalone}$ or
      $\flip{a_i}{\avalfour}{\avalone}$, respectively.  If level $i$ is at $(\avalfour,3)$ or
      at $(\avalzero,4)$, then $a_i$ has no improving flip.
\item\label{cl:a:one}
      Let $b_{i-1}=1$.  If level $i$ is at $(\avaltwo,3)$ or at $(\avaltwo,4)$, then
      $a_i$ has the improving flip $\flip{a_i}{\avaltwo}{\avalthree}$.  If level $i$ is
      at $(\avaltwo,2)$, then $a_i$ has no improving flip.
\end{enumerate}
\end{claim}

\begin{proof}[Proof of Claim~\ref{cl:a}]
A flip of $a_i$ changes only the constraints $J_i$ and $K_i$.  For
fixed values of $b_{i-1}$ and $b_i$ and a value $y\in D$ of $a_i$ let
\begin{equation*}
        \Phi(y) := J_i(b_{i-1},y)+K_i(y,b_i).   
\end{equation*}
Then the flip $\flip{a_i}{x}{y}$ is improving if and only if
$\Phi(y)>\Phi(x)$, and its gain is $\Phi(y)-\Phi(x)$. Since
$H_{i-1}$ is even, all values of $\Phi$ are integers.

From any of the admissible states, a flip to a non-admissible pair loses at least $9M_i-3$ in $K_i$ and
gains at most $2M_i+H_{i-1}$ in $J_i$; since $H_{i-1}<M_i$, it is not
improving.  

It therefore suffices to compare the admissible alternatives.
Using Tables~\ref{tab:Ji} and~\ref{tab:Ki}, we obtain:
\[
\begin{array}{c | c  | c | c}
(a_i,b_i)
    & b_{i-1}
    & \text{improving target for $a_i$}
    & \text{gain}\\
\hline
(\avalzero,0)  & \text{any} & \text{none} & -\\
(\avalfour,1)  & \text{any} & \text{none} & -\\
\hline
(\avalone,2)   & 1          & \avaltwo    & 1\\
(\avalone,2)   & \neq 1     & \text{none} & -\\
\hline
(\avalthree,3) & 0          & \avalfour   & 1\\
(\avalthree,3) & \neq 0     & \text{none} & -\\
(\avalthree,4) & 0          & \avalzero   & 1\\
(\avalthree,4) & \neq 0     & \text{none} & -\\
\hline
(\avalzero,2)  & 0          & \avalone    & 1\\
(\avalfour,2)  & 0          & \avalone    & 1\\
(\avalfour,3)  & 0          & \text{none} & -\\
(\avalzero,4)  & 0          & \text{none} & -\\
\hline
(\avaltwo,3)   & 1          & \avalthree  & 1\\
(\avaltwo,4)   & 1          & \avalthree  & 1\\
(\avaltwo,2)   & 1          & \text{none} & -
\end{array}
\]
Here ``none'' means that every admissible alternative has value at most
$\Phi(a_i)$. The first two rows are immediate because
$(\avalzero,0)$ and $(\avalfour,1)$ are the only admissible pairs with
second component $0$ and $1$, respectively. The remaining entries follow
by direct comparison of the corresponding entries in
Tables~\ref{tab:Ji} and~\ref{tab:Ki}. This proves all parts of the claim.
\end{proof}

\begin{claim}[Flips of $b_i$]\label{cl:b}
Let $1\leq i\leq n$.  For each of the following combinations of the
state of level $i$ and, if $i<n$, the value of $a_{i+1}$, the
improving flips of $b_i$ are as stated, and each flip listed is the
only improving flip of $b_i$ and has gain $1$.
\begin{enumerate}[label=(\alph*),ref=\theclaim(\alph*)]
\item\label{cl:b:F}
      Let $i=n$ or $a_{i+1}=\avalone$.  If level $i$ is at $(\avalzero,0)$, $(\avaltwo,2)$,
      or $(\avalfour,3)$, then $b_i$ has the improving flip $\flip{b_i}{0}{2}$,
      $\flip{b_i}{2}{3}$, or $\flip{b_i}{3}{1}$, respectively.  If level $i$
      is at $(\avalzero,2)$, $(\avalone,2)$, $(\avaltwo,3)$, $(\avalthree,3)$, or $(\avalfour,1)$, then $b_i$
      has no improving flip.
\item\label{cl:b:H}
      Let $i<n$ and $a_{i+1}=\avalthree$.  If level $i$ is at $(\avalfour,1)$, $(\avaltwo,2)$,
      or $(\avalzero,4)$, then $b_i$ has the improving flip $\flip{b_i}{1}{2}$,
      $\flip{b_i}{2}{4}$, or $\flip{b_i}{4}{0}$, respectively.  If level $i$
      is at $(\avalfour,2)$, $(\avalone,2)$, $(\avaltwo,4)$, $(\avalthree,4)$, or $(\avalzero,0)$, then $b_i$
      has no improving flip.
\item\label{cl:b:pin}
      Let $i<n$.  If level $i$ is at $(\avalzero,0)$ and $a_{i+1}\in\{\avalzero,\avalfour\}$,
      or if level $i$ is at $(\avalfour,1)$ and $a_{i+1}=\avaltwo$, then $b_i$ has no
      improving flip.
\end{enumerate}
\end{claim}

\begin{proof}[Proof of Claim~\ref{cl:b}]
A flip of $b_i$ changes only the constraint $K_i$ and, if $i<n$, the
constraint $J_{i+1}$.  For fixed values of $a_i$ and, if $i<n$, of
$a_{i+1}$, and a value $y\in D$ of $b_i$ let
\begin{equation*}
        \Psi(y) :=
        \begin{cases}
          K_i(a_i,y)+J_{i+1}(y,a_{i+1}), & i<n,\\
          K_i(a_i,y), & i=n.
        \end{cases}
\end{equation*}
Then the flip $\flip{b_i}{x}{y}$ is improving if and only if
$\Psi(y)>\Psi(x)$, and its gain is $\Psi(y)-\Psi(x)$.  The values of
$J_{i+1}$ are given by Table~\ref{tab:Ji} with $i$ replaced by $i+1$;
its entries involve $M_{i+1}$ and $H_i=2H_{i-1}+14$.


Let $i=n$ or $a_{i+1}=\avalone$. Since
$J_{i+1}(y,\avalone)=M_{i+1}$ for all $y\in D$, in both cases
$\Psi(y)$ differs from $K_i(a_i,y)$ by a constant. The relevant
comparisons are therefore:
\[
\begin{array}{c|c}
a_i & \text{relevant ordering of values}\\
\hline
\avalzero
    & \Psi(2)=\Psi(0)+1>\Psi(y)
      \quad (y\notin\{0,2\})\\
\avalone
    & \Psi(2)>\Psi(y)
      \quad (y\neq2)\\
\avaltwo
    & \Psi(3)=\Psi(2)+1>\Psi(y)
      \quad (y\notin\{2,3\})\\
\avalthree
    & \Psi(3)>\Psi(y)
      \quad (y\neq3)\\
\avalfour
    & \Psi(1)=\Psi(3)+1>\Psi(y)
      \quad (y\notin\{1,3\})
\end{array}
\]
Hence, at $(\avalzero,0)$, $(\avaltwo,2)$, and
$(\avalfour,3)$ the only improving flips are
$\flip{b_i}{0}{2}$, $\flip{b_i}{2}{3}$, and
$\flip{b_i}{3}{1}$, respectively, and each has gain $1$.
At $(\avalzero,2)$, $(\avalone,2)$, $(\avaltwo,3)$,
$(\avalthree,3)$, and $(\avalfour,1)$ there is no improving
flip. This proves Claim~\ref{cl:b:F}.

Let $i<n$ and $a_{i+1}=\avalthree$.  By Table~\ref{tab:Ji}, the values
$J_{i+1}(y,\avalthree)-M_{i+1}$ for $y=0,1,2,3,4$ are
$H_i,\,0,\,\tfrac12H_i,\,0,\,H_i$.
Adding the entries of the table for $K_i$, and noting that $H_{i-1}=\frac{1}{2}H_{i}-7$, we obtain the following values of $\Psi(y)-M_{i+1}$:
\[
\begingroup
\setlength{\arraycolsep}{4pt}
\begin{array}{c|ccccc}
\Psi(y)-M_{i+1}
    &y=0&y=1&y=2&y=3&y=4\\ \hline
a_i = \avalzero
    &9M_i+H_i
    &0
    &9M_i+\frac12H_i+1
    &0
    &9M_i+H_i-1\\
a_i = \avalone
    &H_i
    &0
    &10M_i+H_i-5
    &0
    &H_i\\
a_i = \avaltwo
    &H_i
    &0
    &9M_i+H_i-4
    &9M_i+\frac12H_i-3
    &9M_i+H_i-3\\
a_i = \avalthree
    &H_i
    &0
    &\frac12H_i
    &10M_i+\frac12H_i-2
    &10M_i+H_i-2\\
a_i = \avalfour
    &H_i
    &9M_i+\frac12H_i
    &9M_i+\frac12H_i+1
    &9M_i+\frac12H_i-1
    &H_i
\end{array}
\endgroup
\]

Since $H_i < M_i$, we then obtain the following comparisons:
\[
\begin{array}{c|c}
a_i & \text{relevant ordering of values}\\
\hline
\avalzero
    & \Psi(0)=\Psi(4)+1>\Psi(y)
      \quad (y\notin\{0,4\})\\
\avalone
    & \Psi(2)>\Psi(y)
      \quad (y\neq2)\\
\avaltwo
    & \Psi(4)=\Psi(2)+1>\Psi(y)
      \quad (y\notin\{2,4\})\\
\avalthree
    & \Psi(4)>\Psi(y)
      \quad (y\neq4)\\
\avalfour
    & \Psi(2)=\Psi(1)+1>\Psi(y)
      \quad (y\notin\{1,2\})
\end{array}
\]
Claim~\ref{cl:b:H} then immediately follows.

Let $i<n$.  If level $i$ is at $(\avalzero,0)$ and $a_{i+1}\in\{\avalzero,\avalfour\}$, then
by Table~\ref{tab:Ji} we have $J_{i+1}(0,a_{i+1})=2M_{i+1}+H_i$ and
$J_{i+1}(y,a_{i+1})\leq H_i$ for $y\neq0$.  Hence every flip of $b_i$
loses at least $2M_{i+1}$ in $J_{i+1}$, while it gains at most $1$ in
$K_i$.  If level $i$ is at $(\avalfour,1)$ and $a_{i+1}=\avaltwo$, then
$J_{i+1}(1,\avaltwo)=2M_{i+1}$ and $J_{i+1}(y,\avaltwo)=0$ for $y\neq1$.  Hence
every flip of $b_i$ loses $2M_{i+1}$ in $J_{i+1}$, while $K_i(\avalfour,1)$ is
the largest entry of row $\avalfour$.  In both cases no flip of $b_i$ is
improving.  This proves Claim~\ref{cl:b:pin}.
\end{proof}

\begin{claim}[Flips of $b_0$]\label{cl:b0}
Each flip listed below is the only improving flip of $b_0$ and has
gain $1$.
\begin{enumerate}[label=(\alph*),ref=\theclaim(\alph*)]
\item\label{cl:b0:F}
      If $a_1=\avalone$, then at $b_0=0$ the variable $b_0$ has the improving
      flip $\flip{b_0}{0}{1}$, and at $b_0=1$ it has no improving flip.
\item\label{cl:b0:H}
      If $a_1=\avalthree$, then at $b_0=1$ the variable $b_0$ has the improving
      flip $\flip{b_0}{1}{0}$, and at $b_0=0$ it has no improving flip.
\item\label{cl:b0:pin}
      If $a_1\in\{\avalzero,\avalfour\}$ and $b_0=0$, or if $a_1=\avaltwo$ and $b_0=1$, then
      $b_0$ has no improving flip.
\end{enumerate}
\end{claim}

\begin{proof}[Proof of Claim~\ref{cl:b0}]
A flip of $b_0$ changes only $K_0(b_0)+J_1(b_0,a_1)$.  By
Table~\ref{tab:Ji} with $i=1$, where $H_0=2$, the values of this sum
for $b_0=0,1,2,3,4$ are $M_1+(100,101,0,0,0)$ if $a_1=\avalone$, which
proves Claim~\ref{cl:b0:F}, and $M_1+(102,101,1,0,2)$ if $a_1=\avalthree$,
which proves Claim~\ref{cl:b0:H}.  If $a_1\in\{\avalzero,\avalfour\}$, they are
$(102+2M_1,\,101,\,1,\,0,\,2)$, and if $a_1=\avaltwo$, they are
$(100,\,101+2M_1,\,0,\,0,\,0)$; in both cases the value at the current
value of $b_0$ exceeds every other value by more than $2M_1$.  This
proves Claim~\ref{cl:b0:pin}.
\end{proof}

\subparagraph{The recursive flip sequences.}
We now describe a sequence of flips and show afterwards that it is the unique maximal improving sequence starting at the initial assignment.

For $0\leq i\leq n$ we define two assignments of $V_i$:
\begin{itemize}
\item $\mathbf G_i$ assigns the value $0$ to every $b$-variable and the value $\avalzero$ to every $a$-variable in $V_i$; since $\avalzero=0$, this is the all-zero assignment of $V_i$;
\item $\mathbf R_i$ coincides with $\mathbf G_i$, except that
      $b_i=1$ and, if $i\geq1$, $a_i=\avalfour$.
\end{itemize}
For $i\geq1$, both restrict to $\mathbf G_{i-1}$ on $V_{i-1}$; level
$i$ is at $(\avalzero,0)$ in $\mathbf G_i$ and at $(\avalfour,1)$ in $\mathbf R_i$.

We define two sequences of flips of variables in $V_i$ recursively:
$\alpha_0=(\flip{b_0}{0}{1})$, $\beta_0=(\flip{b_0}{1}{0})$, and for
$i\geq1$
\[
\begin{aligned}
\alpha_i=\bigl(&\flip{b_i}{0}{2},\ \flip{a_i}{\avalzero}{\avalone},\ \alpha_{i-1},\
  \flip{a_i}{\avalone}{\avaltwo},\ \flip{b_i}{2}{3},\ \flip{a_i}{\avaltwo}{\avalthree},\ \\ 
  & \qquad \beta_{i-1},\
  \flip{a_i}{\avalthree}{\avalfour},\ \flip{b_i}{3}{1}\bigr),\\
\beta_i=\bigl(&\flip{b_i}{1}{2},\ \flip{a_i}{\avalfour}{\avalone},\ \alpha_{i-1},\
  \flip{a_i}{\avalone}{\avaltwo},\ \flip{b_i}{2}{4},\ \flip{a_i}{\avaltwo}{\avalthree},\ \\
  & \qquad \beta_{i-1},\
  \flip{a_i}{\avalthree}{\avalzero},\ \flip{b_i}{4}{0}\bigr).
\end{aligned}
\]
By induction on $i$, $\alpha_i$ transforms $\mathbf G_i$ into
$\mathbf R_i$ and $\beta_i$ transforms $\mathbf R_i$ into
$\mathbf G_i$.

\subparagraph{Length of the sequences.}
The concatenation $\alpha_i\beta_i$ is one full cycle of level $i$.
During it, $a_i$ runs once through the values
\[
        \avalzero,\ \avalone,\ \avaltwo,\ \avalthree,\ \avalfour,\ \avalone,\ \avaltwo,\ \avalthree
\]
and returns to $\avalzero$, while $b_i$ runs through $0,2,3,1,2,4$ and
returns to $0$.  The lower levels perform $\alpha_{i-1}$ in each of
the two phases with $a_i=\avalone$, and $\beta_{i-1}$ in each of the two
phases with $a_i=\avalthree$; that is, they complete two full cycles of level
$i-1$ during one full cycle of level $i$.  This doubling is the source
of the exponential length.  Precisely, for $i\geq1$ each of $\alpha_i$ and $\beta_i$ consists of seven flips of level $i$, one copy of $\alpha_{i-1}$, and one copy of $\beta_{i-1}$. Hence, by induction on $i$, the sequences $\alpha_i$ and $\beta_i$ have the same length $L_i$, where $L_0=1$ and $L_i=2L_{i-1}+7$ for $i\geq1$. Consequently,
\[
        L_i=2^{i+3}-7.
\]

We also note the following consequence of the definitions:
\begin{itemize}
    \item [(*)] along $\alpha_i$, the variable $b_i$ has the value $1$ only at the
final assignment $\mathbf R_i$; along $\beta_i$, it has the value $0$
only at the final assignment $\mathbf G_i$.
\end{itemize}

Indeed, along $\alpha_i$ the variable $b_i$ runs through $0,2,3,1$,
and along $\beta_i$ through $1,2,4,0$ (for $i=0$ through $0,1$ and
$1,0$).

\subparagraph{Uniqueness of the improving sequence.}
We now show that $\alpha_n$ is the unique maximal improving sequence starting at the initial assignment. 

Since the gains of variables in $V_i$ depend only on the values of
the variables in $V_i\cup\{a_{i+1}\}$, where $a_{n+1}$ is omitted for $i=n$, the following statements are
well defined.  We prove them simultaneously by induction on $i$.

\medskip
\noindent
$(A_i)$, for $0\leq i\leq n$:
Assume that $a_{i+1}=\avalone$ if $i<n$.  Start at an assignment that
restricts to $\mathbf G_i$ on $V_i$ and perform $\alpha_i$. Then at
every assignment before the last one, the next flip of $\alpha_i$ is
the unique improving flip of a variable in $V_i$, and it has gain $1$;
at the last assignment, which restricts to $\mathbf R_i$ on $V_i$, no variable of $V_i$
has an improving flip.

\medskip
\noindent
$(B_i)$, for $0\leq i<n$:
Assume that $a_{i+1}=\avalthree$. Start at an assignment that
restricts to $\mathbf R_i$ on $V_i$ and perform $\beta_i$. Then at
every assignment before the last one, the next flip of $\beta_i$ is
the unique improving flip of a variable in $V_i$, and it has gain $1$;
at the last assignment, which restricts to $\mathbf G_i$ on $V_i$,
no variable of $V_i$ has an improving flip.

\medskip
In other words, $(B_i)$ is the same as $(A_i)$, except that we assume $a_{i+1}=\avalthree$, start at $\mathbf R_i$, and perform
$\beta_i$, which ends at $\mathbf G_i$.
We now derive a consequence that we use repeatedly.

\medskip
\noindent
$(P_i)$, for $0\leq i<n$:
If the variables of $V_i$ are in $\mathbf G_i$ and $a_{i+1}\in\{\avalzero,\avalfour\}$,
or if they are in $\mathbf R_i$ and $a_{i+1}=\avaltwo$, then no variable of
$V_i$ has an improving flip.

\medskip
\noindent
Indeed, assume $(A_i)$ and $(B_i)$.  By $(A_i)$, applied at its first
assignment, the only improving flip in $V_i$ at $\mathbf G_i$ with
$a_{i+1}=\avalone$ is the first flip of $\alpha_i$, which is a flip of
$b_i$.  The gains of the variables in $V_i\setminus\{b_i\}$ do not
depend on $a_{i+1}$, so none of them has an improving flip at
$\mathbf G_i$, regardless of the value of $a_{i+1}$.  The same argument
with $(B_i)$ applies at $\mathbf R_i$, since the first flip of
$\beta_i$ is also a flip of $b_i$.  The remaining variable $b_i$ has no improving flip in both situations of $(P_i)$, by
Claim~\ref{cl:b:pin} if $i\geq1$ and by Claim~\ref{cl:b0:pin} if $i=0$.

\medskip
\noindent
\textbf{Base case $i=0$.}
Here $V_0=\{b_0\}$, and $(A_0)$ and $(B_0)$ are Claims~\ref{cl:b0:F} and~\ref{cl:b0:H}.

\medskip
\noindent
\textbf{Induction step.}
Let $1\leq i\leq n$ and assume $(A_{i-1})$ and $(B_{i-1})$, and hence
$(P_{i-1})$.  We prove $(A_i)$; for this, we assume $i=n$ or $a_{i+1}=\avalone$.  The
assignments occurring along $\alpha_i$ are listed in the following
table, together with the state of the lower levels $V_{i-1}$ and the
improving flip.
\[
\begin{array}{c|c|l}
(a_i,b_i)&\text{lower levels}&\text{improving flip}\\ \hline
(\avalzero,0)&\mathbf G_{i-1}&\flip{b_i}{0}{2}\\
(\avalzero,2)&\mathbf G_{i-1}&\flip{a_i}{\avalzero}{\avalone}\\
(\avalone,2)&\alpha_{i-1}\text{ is performed}
      &\text{the flips of }\alpha_{i-1},\text{ then }\flip{a_i}{\avalone}{\avaltwo}\\
(\avaltwo,2)&\mathbf R_{i-1}&\flip{b_i}{2}{3}\\
(\avaltwo,3)&\mathbf R_{i-1}&\flip{a_i}{\avaltwo}{\avalthree}\\
(\avalthree,3)&\beta_{i-1}\text{ is performed}
      &\text{the flips of }\beta_{i-1},\text{ then }\flip{a_i}{\avalthree}{\avalfour}\\
(\avalfour,3)&\mathbf G_{i-1}&\flip{b_i}{3}{1}\\
(\avalfour,1)&\mathbf G_{i-1}&\text{none}
\end{array}
\]
We verify each row; in every case, all variables of $V_i$ other than
the one indicated have no improving flip, and the indicated flip has
gain $1$.

\emph{Rows with the lower levels in $\mathbf G_{i-1}$ or $\mathbf R_{i-1}$.}
In the rows $(\avalzero,0)$, $(\avalzero,2)$, $(\avalfour,3)$, and $(\avalfour,1)$, the lower levels are in
$\mathbf G_{i-1}$ and $a_i\in\{\avalzero,\avalfour\}$; in the rows $(\avaltwo,2)$ and $(\avaltwo,3)$,
they are in $\mathbf R_{i-1}$ and $a_i=\avaltwo$.  By $(P_{i-1})$, no
variable of $V_{i-1}$ has an improving flip.  The flips of $a_i$ are
given by Claims~\ref{cl:a:idle} and~\ref{cl:a:zero} with $b_{i-1}=0$ in the
rows $(\avalzero,0)$, $(\avalzero,2)$, $(\avalfour,3)$, $(\avalfour,1)$, and by Claim~\ref{cl:a:one} with
$b_{i-1}=1$ in the rows $(\avaltwo,2)$, $(\avaltwo,3)$.  The flips of $b_i$ are
given by Claim~\ref{cl:b:F}. Together they yield exactly the flips in the table.

\emph{The forward traversal.}
In the row $(\avalone,2)$ we have $a_i=\avalone$, so $(A_{i-1})$ applies to
$V_{i-1}$: at each assignment, the next flip of $\alpha_{i-1}$ is the
unique improving flip in $V_{i-1}$, and at the final assignment
$\mathbf R_{i-1}$ there is none.  By Claim~\ref{cl:b:F}, $b_i$ has
no improving flip.  By Claim~\ref{cl:a:F}, $a_i$ has an improving
flip exactly when $b_{i-1}=1$, which by 
(*) happens only
at $\mathbf R_{i-1}$; the flip is then $\flip{a_i}{\avalone}{\avaltwo}$.

\emph{The reverse traversal.}
In the row $(\avalthree,3)$ we have $a_i=\avalthree$, so $(B_{i-1})$ applies to
$V_{i-1}$.  By Claim~\ref{cl:b:F}, $b_i$ has no improving flip.  By
Claim~\ref{cl:a:H3}, $a_i$ has an improving flip exactly when
$b_{i-1}=0$, which by 
(*) happens only at the final
assignment $\mathbf G_{i-1}$ of $\beta_{i-1}$; the flip is then
$\flip{a_i}{\avalthree}{\avalfour}$.

This proves $(A_i)$. For $i<n$ we prove $(B_i)$ in the same way;
now $a_{i+1}=\avalthree$, and the assignments along $\beta_i$ are
\[
\begin{array}{c|c|l}
(a_i,b_i)&\text{lower levels}&\text{improving flip}\\ \hline
(\avalfour,1)&\mathbf G_{i-1}&\flip{b_i}{1}{2}\\
(\avalfour,2)&\mathbf G_{i-1}&\flip{a_i}{\avalfour}{\avalone}\\
(\avalone,2)&\alpha_{i-1}\text{ is performed}
      &\text{the flips of }\alpha_{i-1},\text{ then }\flip{a_i}{\avalone}{\avaltwo}\\
(\avaltwo,2)&\mathbf R_{i-1}&\flip{b_i}{2}{4}\\
(\avaltwo,4)&\mathbf R_{i-1}&\flip{a_i}{\avaltwo}{\avalthree}\\
(\avalthree,4)&\beta_{i-1}\text{ is performed}
      &\text{the flips of }\beta_{i-1},\text{ then }\flip{a_i}{\avalthree}{\avalzero}\\
(\avalzero,4)&\mathbf G_{i-1}&\flip{b_i}{4}{0}\\
(\avalzero,0)&\mathbf G_{i-1}&\text{none}
\end{array}
\]
The verification is identical, using Claim~\ref{cl:b:H} instead of Claim~\ref{cl:b:F}, and Claim~\ref{cl:a:H4} instead of Claim~\ref{cl:a:H3} in the reverse traversal.  This completes the induction.

We are now able to conclude the proof. 
For $i=n$, we have $V_n=\{b_0,a_1,b_1,\ldots,a_n,b_n\}$, the set of all
variables, and $(A_n)$ holds without any assumptions on further
variables. The assignment $\mathbf G_n$ is the initial assignment.  By $(A_n)$, at every assignment along $\alpha_n$ except the last one, there is exactly one improving flip, namely the next flip of $\alpha_n$, and
the final assignment $\mathbf R_n$ is a local optimum. Hence $\alpha_n$ is the unique maximal strictly improving sequence starting at $\mathbf G_n$, and its length is
\[
        L_n=2^{n+3}-7.
\]
The instance $Q_n$ has $2n+1$ variables and $2n+1$ constraints, and all its constraint values are integers with $O(n)$ bits. However, $L_n$ is exponential in $n$.
This shows that $(2,2,5)$-MCA with the flip neighborhood has the all-exp property.
\end{proof}

\section{(2,2,2)-MCA flip sequences have at most quadratic length}
\label{sec:222}

In this section, we give a direct proof of the at most quadratic length of all $(2,2,2)$-MCA flip sequences by compressing the coefficients without changing which flips are improving. 
The key point is that the constraint graph has maximum degree two. 
Thus, for each variable, the gain of a flip depends on at most two neighboring variables, and preserving the signs and relative magnitudes of the corresponding coefficients is enough to preserve all improving flips. 
We replace the original coefficients by small integers with this property. 
The resulting integer-valued objective has range $O(n^2)$, which bounds the length of every strictly improving sequence. 
The same conclusion, with the better constant $\binom{n+1}{2}$ for tree-structured instances, also follows from the more general results of~\cite{Kaz2020,KCJ2020}.

\Theoremttt*

\begin{proof}
Let $x = (x_1, \ldots, x_n)$ be the $n$ Boolean variables of a $(2,2,2)$-MCA instance and let 
$F$ be the objective function. 
Every function of one or two Boolean variables can be written as a multilinear polynomial of degree at most two. 
Since every constraint involves at most two variables, collecting the terms of all constraints gives
\[
    F(x)=K+\sum_{i=1}^n h_ix_i+\sum_{\{i,j\}\in E}c_{ij}x_ix_j,
\]
where $E$ is the set of pairs with $c_{ij}\neq0$. 

Let $G=(\{1,\dots,n\},E)$. A term $x_ix_j$ can only arise from a constraint containing both $x_i$ and $x_j$. 
Since every variable occurs in at most two constraints, $G$ has maximum degree at most two. 
Hence, writing $m:=|E|$, we have $m\le n$.

For $i\in\{1,\dots,n\}$, let $N(i)$ be its neighbors in $G$ and define $q_i(x):=\sum_{j\in N(i)}c_{ij}x_j$. 
Flipping $x_i$ from its current value to $1-x_i$ changes the objective value by $(1-2x_i)(h_i+q_i(x))$. Thus, to preserve which flips are improving, it is enough to preserve the sign of $h_i+q_i(x)$ for every $i$ and every $x\in\{0,1\}^n$.

We first replace the coefficients $c_{ij}$ by small integers. 
These coefficients are shared by the flips of both endpoints of an edge and therefore have to be replaced consistently. 
Once they have been fixed, the coefficient $h_i$ can be replaced independently for each $i$.

\smallskip
\noindent\textbf{Compressing the quadratic coefficients.}
For an edge $e=ij$, write $c_e:=c_{ij}$. 
Let $0<\alpha_1<\cdots<\alpha_s$ be the distinct values among $\{|c_e|:e\in E\}$. 
For each $e\in E$, let $r_e$ be the index for which $|c_e|=\alpha_{r_e}$, and set $d_e:=2\operatorname{sgn}(c_e)r_e$. 
Thus, the replacement of $c_e$ by $d_e$ preserves the sign of every coefficient, and for any $e,f\in E$ it also preserves whether $|c_e|<|c_f|$, $|c_e|=|c_f|$, or $|c_e|>|c_f|$.
Define $q_i'(x):=\sum_{j\in N(i)}d_{ij} x_j$ (i.e., $q'_i$ is the same as $q_i$ except that we replace every $c_{ij}$ by $d_{ij}$).

\begin{claim}\label{cl:order}
For every $i$ and all $x,x'\in\{0,1\}^n$,
\[
    \operatorname{sgn}\bigl(q_i(x)-q_i(x')\bigr)
    =\operatorname{sgn}\bigl(q_i'(x)-q_i'(x')\bigr).
\]
\end{claim}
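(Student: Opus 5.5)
Since $G$ has maximum degree at most two, $|N(i)|\le 2$, so $q_i(x)$ takes at most four values, determined by $x_j$ for $j\in N(i)$. Thus $q_i(x)-q_i(x')$ is a signed combination of the at most two coefficients $c_{ij}$, $j\in N(i)$, and I will proceed by a case analysis on $|N(i)|$. The only facts about $d$ I will use are those recorded just above: $d_e$ has the same sign as $c_e$, and the map $|c_e|\mapsto|d_e|=2r_e$ is strictly increasing, so it preserves all comparisons $<,=,>$ between absolute values of quadratic coefficients.

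\textbf{Case $|N(i)|=0$.} Both differences are $0$, and there is nothing to prove.

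\textbf{Case $|N(i)|=1$.} Let $N(i)=\{j\}$. Then $q_i(x)-q_i(x')=c_{ij}(x_j-x'_j)$ and $q_i'(x)-q_i'(x')=d_{ij}(x_j-x'_j)$. Since $\operatorname{sgn}(d_{ij})=\operatorname{sgn}(c_{ij})$, the two signs agree.

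\textbf{Case $|N(i)|=2$.} Let $N(i)=\{j,k\}$, and set $\delta_j:=x_j-x'_j\in\{-1,0,1\}$ and $\delta_k:=x_k-x'_k\in\{-1,0,1\}$. The difference is $c_{ij}\delta_j+c_{ik}\delta_k$, and I split by how many of the $\delta$'s are nonzero.
\begin{itemize}
\item If at most one $\delta$ is nonzero, the argument reduces to the previous case.
\item If both are nonzero, write the sum as $\delta_j c_{ij}+\delta_k c_{ik}$, i.e.\ as $u+v$ with $u=\pm c_{ij}$ and $v=\pm c_{ik}$, using the same signs $\delta_j,\delta_k$ on both the $c$-side and the $d$-side.
\begin{itemize}
\item If $u$ and $v$ have the same sign, the sum has that sign. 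The corresponding $d$-terms $\delta_j d_{ij}$ and $\delta_k d_{ik}$ have the same signs, so the $d$-sum has that sign too.
\item If $u$ and $v$ have opposite signs, the sign of $u+v$ is the sign of the term with larger absolute value, and the sum is $0$ when $|c_{ij}|=|c_{ik}|$. Since the replacement preserves the order between $|c_{ij}|$ and $|c_{ik}|$, the same term dominates (or ties) after replacement, giving the same sign.
\end{itemize}
\end{itemize}

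The main obstacle is really the reason the statement works at all: differences only ever involve at most two coefficients with coefficients $\pm1$. For larger degree, comparisons like $c_1+c_2$ versus $c_3$ would not be preserved by rank compression. The care needed is to treat the opposite-sign subcase with ties correctly; that is exactly where preservation of equality of absolute values (same $r_e$) is used. The factor $2$ in $d_e$ is irrelevant here and presumably matters later, when compressing $h_i$.
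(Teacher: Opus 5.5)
Your proof is correct and follows the same argument as the paper. Both write $q_i(x)-q_i(x')$ as a sum of at most two terms $c_{ij}(x_j-x'_j)$, and both conclude from two facts: the $d$-side terms have matching signs, and the comparison of $|c_{ij}|$ and $|c_{ik}|$, ties included, is preserved in the opposite-sign case. Your version just spells out explicitly the case split on $|N(i)|$ and on the number of nonzero $\delta$'s, which the paper handles in a single sentence.
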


\begin{proof}[Proof of \cref{cl:order}]
We have
\[
  q_i(x)-q_i(x')=\sum_{j\in N(i)}c_{ij}(x_j-x'_j)
  \qquad\text{and}\qquad
  q_i'(x)-q_i'(x')=\sum_{j\in N(i)}d_{ij}(x_j-x'_j),
\]
where $x_j-x'_j\in\{-1,0,1\}$ and $|N(i)|\leq2$.  Since $\operatorname{sgn}(d_{ij})=\operatorname{sgn}(c_{ij})$, corresponding terms of the two sums have the same sign.  The sign of a sum of at most two terms is determined by the signs of the terms and, if these are opposite, by the comparison of their absolute values.  If two terms have opposite signs, then $i$ has two neighbors $j$ and $k$ with $x_j\neq x'_j$ and $x_k\neq x'_k$, so these absolute values are $|c_{ij}|$ and $|c_{ik}|$ in the first sum and $|d_{ij}|$ and $|d_{ik}|$ in the second.  Since the replacement preserves the comparison of absolute values, both sums have the same sign.
\end{proof}




\smallskip
\noindent\textbf{Compressing the unary coefficients.}
Fix $i$, and let $S_i:=\{q_i(x):x\in\{0,1\}^n\}$. Since $q_i$ depends only on the at most two variables $x_j$ with $j\in N(i)$, we have $|S_i|\le4$. Define $S_i'$ analogously
for $q_i'$. By Claim~\ref{cl:order}, the correspondence
$q_i(x)\mapsto q_i'(x)$ is well defined and preserves the order of all values;
denote it by $\varphi_i:S_i\to S_i'$.

We choose an integer $h_i'$ so that $h_i'+\varphi_i(s)$ has the same sign as
$h_i+s$ for every $s\in S_i$. There are four cases:
\begin{itemize}
\item if $-h_i\in S_i$, set
      $h_i':=-\varphi_i(-h_i)$;
\item if $-h_i<\min S_i$, set
      $h_i':=1-\min S_i'$;
\item if $-h_i>\max S_i$, set
      $h_i':=-1-\max S_i'$;
\item otherwise, let $a<b$ be consecutive elements of $S_i$ with
      $a<-h_i<b$, and set
      $h_i':=-\varphi_i(a)-1$.
\end{itemize}
Every element of $S_i'$ is even, since every $d_e$ is even. Thus, in the last
case, $\varphi_i(b)\ge\varphi_i(a)+2$, so $-\!h_i'$ lies strictly between
$\varphi_i(a)$ and $\varphi_i(b)$. It follows in all four cases that
\begin{equation}
\label{eq:222_equiv}
    \operatorname{sgn}\bigl(h_i+q_i(x)\bigr)
    =\operatorname{sgn}\bigl(h_i'+q_i'(x)\bigr)
\end{equation}
for every assignment $x$. Moreover,
$|h_i'|\le\max_{s'\in S_i'}|s'|+1
\le\sum_{e\ni i}|d_e|+1$.

\smallskip
\noindent\textbf{Bounding the length of an improving sequence.}
Define
\[
    \widetilde F(x)
      :=\sum_{i=1}^n h_i'x_i
        +\sum_{e=\{i,j\}\in E}d_e x_ix_j .    
\]

Equation~\eqref{eq:222_equiv} implies that for every assignment and every variable, flipping that variable increases $F$ if and only if it increases $\widetilde F$. 
Thus $F$ and $\widetilde F$ have exactly the same strictly improving flips. 
Since $\widetilde F$ is integer-valued, every improving flip increases its value by at least $1$, and
therefore every improving flip sequence has length at most $\max\widetilde F-\min\widetilde F$.

We have $\max\widetilde F-\min\widetilde F
 \le\sum_i|h_i'|+\sum_e|d_e|$.
The preceding bound on $|h_i'|$, summed over all $i$, yields $\sum_i|h_i'|\le2\sum_e|d_e|+n$, since every edge has two endpoints.
Consequently, $\max\widetilde F-\min\widetilde F\le3\sum_e|d_e|+n$.

Finally, we have $\sum_e |d_e| = 2\sum_e r_e \leq 2\left((m-s)s+\sum^s_{i=1}i \right) \leq 2\sum^m_{i=1} i \leq m(m+1)$.
Using $m\le n$, we obtain 
\[
    \max\widetilde F-\min\widetilde F
    \le3m(m+1)+n
    \le3n(n+1)+n
    =3n^2+4n .
\]
This proves the theorem.
\end{proof}

\section*{Statement on usage of generative AI}
We first developed a construction showing that $(2,2,6)$-MCA has the all-exp property. 
We then gave this construction to ChatGPT-5.6~Sol and asked whether the same underlying idea could be adapted to use at most five values per variable. 
This interaction led to the $(2,2,5)$-MCA construction presented in this paper. 
The proof originally suggested by ChatGPT-5.6~Sol was considerably more involved; the proof presented here was developed independently by the authors and is several pages shorter.

\bibliographystyle{plainurl}
\bibliography{refs}
\end{document}